\let\chapter\undefined
 \newtheorem{theorem}{Theorem}[section]
\newtheorem{proposition}[theorem]{Proposition}
\newtheorem{lemma}[theorem]{Lemma}
\newtheorem{observation}[theorem]{Observation}
\newtheorem{example}[theorem]{Example}
\newtheorem{definition}[theorem]{Definition}
\newcommand{\name}[1]{\mathrm{#1}}
\newcommand{\funct}[2]{\mathrm{#1}\!\left({#2}\right)}
\newcommand{\To}{\!\rightarrow\!}
\newcommand{\card}[1]{\left|{#1}\right|}
\renewcommand{\epsilon}{\varepsilon}
\renewcommand{\phi}{\varphi}
\newcommand{\oof}[1]{\mathrm{O}\!\left({#1}\right)}
\newcommand{\oofi}[1]{\oof{#1}}
\newcommand{\textcd}[1]{\textup{\texttt{\small{#1}}}}
\newcommand{\true}{\textrm{True}}
\newcommand{\false}{\textrm{False}}
\newcommand{\vars}[1]{\funct{vars}{#1}}
\begin{document}
\title{Worst-case Analysis for\\Interactive Evaluation of Boolean Provenance}

\clubpenalty=10000
\widowpenalty = 10000

\author{Antoine Amarilli\\{\normalsize LTCI, Télécom Paris, Institut Polytechnique de Paris}
	\and 
	Yael Amsterdamer\\{\normalsize Bar-Ilan University}
}
\date{}

\maketitle

\begin{abstract}

In recent work, we have introduced a framework for fine-grained consent management in databases, which combines Boolean data provenance with the field of interactive Boolean evaluation. In turn, interactive Boolean evaluation aims at unveiling the underlying truth value of a Boolean expression by frugally probing the truth values of individual values. The required number of probes depends on the Boolean provenance structure and on the (a-priori unknown) probe answers. Prior work has analyzed and aimed to optimize the \emph{expected} number of probes, where expectancy is with respect to a probability distribution over probe answers. This paper gives a novel \emph{worst-case} analysis for the problem, 
inspired by the decision tree depth of Boolean functions.

Specifically, we introduce a notion of \emph{evasive provenance expressions}, namely expressions, where one may need to probe all variables in the worst case. We show that read-once expressions are evasive, and identify an additional class of expressions (acyclic monotone 2-DNF) for which evasiveness may be decided in PTIME. As for the more general question of finding the optimal strategy, we show that it is coNP-hard in general. We are still able to identify a sub-class of provenance expressions that is ``far from evasive'', namely, where an optimal worst-case strategy probes only $\oof{\log{n}}$ out of the $n$ variables in the expression, and show that we can find this optimal strategy in polynomial time.
\end{abstract}

\section{Introduction}\label{sec:intro}
There is a large body of work on Boolean provenance for database queries (e.g.,~\cite{amsterdamer2011limitations,amsterdamer2011aggregate,buneman2001why,cheney2009provenance,deutch2014circuits,green2007provenance,imielinski1984incomplete,olteanu2012factorised}). The basic paradigm is that one starts by associating a Boolean variable with every tuple in the input database; then, for queries in a given language (say, Select-Project-Join-Union), we explain how the query operators can be extended to annotate their results with Boolean expressions depending on the input tuples. For instance, for the join operation, a tuple obtained by the joining two tuples annotated by~$x$ and~$y$ will be annotated by~$x \wedge y$. If we union two relations and the same tuple occurs in both with annotations by~$x$ and~$y$, then the tuple in the (set) union of the two relations will be annotated by~$x \vee y$. A basic property of such provenance structures, dating back to~\cite{imielinski1984incomplete}, is that these provenance expressions describe the \emph{Possible Worlds} that make the query true: a truth valuation~$v$ of the Boolean variables satisfies the provenance expression annotating a tuple~$t$ if and only if~$t$ appears in the query result when running the query over the sub-database consisting only of the tuples annotated by variables that~$v$ maps to \true{}.

Building on these provenance structures, we have proposed in recent work a framework for consent management in shared databases~\cite{drien2021managing}. The setting is such that multiple peers share data but do not necessarily agree upfront to every kind of use of their data -- rather, explicit permission is needed when the data is to be used in a particular way, e.g., shared with a third party. This gets more complex when data is transformed via queries: when we wish to use a query result, it is challenging to determine whose consent should be asked, because the result may rely on multiple tuples originally contributed by multiple peers. The model proposed in~\cite{drien2021managing} follows the same possible worlds semantics that we have mentioned above: namely, consent is granted with respect to an output tuple~$t$ if and only if, on the sub-database of the input database containing all tuples for which consent was granted, the output~$t$ is part of the query result. This means that the problem of consent management reduces to the problem of \emph{Interactive Evaluation of Boolean provenance expressions}, i.e., that of determining the truth value of these expressions by probing individual variables to reveal their truth values. 
Interactive Boolean Evaluation has been extensively studied, and has mostly focused on the following problem: given probabilities on the truth values of the Boolean variables (namely, the probability of obtaining an affirmative probe answer for each variable), we must determine a probing strategy (i.e., which variables to probe and in what order) so that the expected number of probes is minimized. There is a significant body of work on the topic (e.g.,~\cite{allen2017evaluation,amsterdamer2019pepper,boros2000sequential,cicalese2014diagnosis,deshpande2014approximation,drien2021managing,kaplan2005learning,keren2008reduction,unluyurt2004sequential}) and in~\cite{drien2021managing} we have shown how to leverage this work for our application of consent management.

In this paper, we start from Boolean provenance and study the complexity of interactively evaluating it, but focus on a different optimization goal. Namely, rather than optimizing the expected number of probes, we focus on optimizing the worst-case number of probes, i.e., minimizing the maximal number of probes that may be needed on \emph{any} possible sequence of probe answers. In contrast to expected-case analysis, worst-case analysis requires no probability estimation for probe answers, and is useful when one aims for ``cautious'' strategies in terms of the number of probes.

\begin{example}\label{ex:intro}
Consider a database with two tuples, one has the Boolean provenance $x\land y$ and the other has the provenance $x\lor z$ (for now, regardless of how this database could be created). If we probe~$x$ first, i.e., discover its truth value, we will make at most two probes in total: if~$x$ is \true{}  we need not probe~$z$ to evaluate the second expression; and if~$x$ is \false{} we need
not probe~$y$ to evaluate the first expression. This strategy optimizes the worst-case number of probes for the given provenance. In contrast, a non-optimal strategy would probe~$y$ and~$z$ first and in the worst case would require additionally querying~$x$ to reach full evaluation.
\end{example}

A natural way to capture interactive evaluation strategies is by representing the Boolean expression as a 
\emph{Binary Decision Diagram}~\cite{bdd}. Intuitively, BDDs are acyclic graphs where each (inner) node corresponds to a variable to be probed next, where edges correspond to possible probe answers, 
and where leaves give the result of evaluating the function. The worst-case number of probes for a given strategy is the depth (i.e.\ longest root-to-leaf path) of the BDD representation of the strategy. Worst-case optimization for interactive evaluation then amounts to finding a BDD with minimal depth for the given provenance expressions. We can always find a trivial BDD where this depth is the number of variables of the expression, by asking about all variables, but in some cases we may be able to achieve a lower depth.

We define two related decision problems. We first ask: can we tell, by looking at a given provenance expression, whether the worst-case number of probes is simply the number of variables in the expression; or equivalently, is there no BDD for the expression whose depth is smaller than the number of variables? 
Following existing work on Boolean functions~\cite{yao1988monotone,chronaki1990survey},
we call \emph{evasive} the provenance expressions that have this trait, i.e., that require probing all their variables in the worst case, and call the decision problem \texttt{DEC-BDD-EVASIVE}. Evasiveness implies a lower bound for the worst-case performance of probing strategies: if a provenance expression is evasive, then there is no hope of improving on the trivial BDD that queries all variables in some order. 
Second, we introduce the problem \texttt{OPT-BDD-DEPTH} of finding the optimal strategy, namely the BDD with minimal depth for given provenance expressions. We also study its phrasing as a decision problem, called \texttt{DEC-BDD-DEPTH}, where we must  decide whether the depth of the optimal BDD for the expression (which we call \emph{the expression depth}) is less than some given $k$: this problem generalizes \texttt{DEC-BDD-EVASIVE}. 
We study these problems for multiple classes of provenance expressions of interest. Our main complexity results are the following.

\paragraph*{General Provenance Expressions} We start by considering general provenance expression, which in particular may involve negation (this can be generated for queries with negation/difference). We show that \texttt{DEC-BDD-DEPTH} and \texttt{OPT-BDD-DEPTH} are coNP-hard. Since these problems are intractable in the most general settings, we then turn to analyzing sub-classes of interest.  

\paragraph*{Read-once Provenance} Multiple lines of previous work have established the importance of \emph{read-once provenance}, namely expressions where no variable occur more than once. Generalizing to multiple provenance expressions (of multiple output tuples), we may distinguish between sets of expressions that are \emph{overall read-once} (namely no variable occur twice even across expressions)  and sets where each individual expression is read-once but variables may re-occur in different expressions. For overall read-once provenance expressions, or ones that may be written in an equivalent overall read-once form, we show that they are always evasive.  This means that we cannot improve the worst-case probing complexity of these expressions.
By contrast, we show that read-once provenance is not always evasive when it is not overall read-once. 

\paragraph*{Monotone Provenance} A class of queries whose provenance is often studied is that of SPJU (Select-Project-Join-Union queries). For such queries, the provenance includes no negation and may be represented as a monotone DNF formula. It is open whether our decision problems are tractable for this class, and we show that the problem is far from trivial. Specifically, we show that there exists a sequence of such provenance expressions that are ``far from evasive'', namely for which there is an optimal probing strategy that is exponentially better than the trivial strategy:
it only probes~$\oof{\log{n}}$ variables in the worst case, where~$n$ is the number of variables in the expression. For this particular sequence, the optimal strategy may be efficiently computed. 

\paragraph*{Acyclic monotone 2-DNF Provenance}
 We finally identify a sub-class of interest, which we call \emph{acyclic monotone 2-DNF}. These are monotone 2-DNF formulas, where the graph obtained by treating each term as an edge between its variables is acyclic. For this class, we show a polynomial-time algorithm to check whether a given expression is evasive. The algorithm is based on a non-evasiveness pattern, namely a pattern that occurs in the graph if and only if the corresponding expression is non-evasive.

The rest of this paper is organized as follows. We start with an overview of our model in Section~\ref{sec:model}. Our formal results and proofs are detailed in Section~\ref{sec:results}. We overview related work in Section~\ref{sec:related} and conclude in Section~\ref{sec:conc}.

\begin{table*}
\setlength{\tabcolsep}{3pt}
{\small
\begin{tabularx}{0.46\textwidth}{lll|l}
\multicolumn{4}{l}{\textbf{Acquisitions}}\\                                \toprule
Acquired & Acquiring & \multicolumn{2}{l}{Date} \\
\midrule
A2Bdone & Zazzer & 7/11/2020 & $a_0$\\
microBarg & Fiffer & 1/5/2017 & $a_1$ \\
fPharm & Fiffer & 1/2/2016 & $a_2$ \\
Optobest & microBarg & 8/8/2015 & $a_3$ \\
\bottomrule
\multicolumn{4}{l}{~}  \\
\multicolumn{4}{l}{~}  \\
\end{tabularx}\hspace{2mm}\begin{tabularx}{0.48\textwidth}{lll|l}
\multicolumn{4}{l}{\textbf{Education}}\\
\toprule
Alumni & Institute & \multicolumn{2}{l}{Year}\\
\midrule
Usha Koirala & U.\ Melbourne & 2017 & $e_0$  \\
Pavel Lebedev & U.\ Melbourne & 2017 & $e_1$  \\
Nana Alvi & U.\ Sau Paolo & 2010 & $e_2$ \\
Nana Alvi & U.\ Melbourne & 2017 & $e_3$  \\
Gao Yawen & U.\ Sau Paolo & 2010 & $e_4$ \\
Amaal Kader & U.\ Cape Town & 2005 & $e_5$  \\
\bottomrule
\end{tabularx}\hspace{2mm}\begin{tabularx}{0.62\textwidth}{lll|l}
\multicolumn{4}{l}{\textbf{Roles}}\\
\toprule
Organization & Role & \multicolumn{2}{l}{Member} \\
\midrule
A2Bdone & Founder & Usha Koirala & $r_0$ \\
A2Bdone & Founding member & Pavel Lebedev & $r_1$ \\
A2Bdone & Founding member & Nana Alvi & $r_2$ \\
microBarg & Co-founder & Nana Alvi & $r_3$\\
microBarg & Co-founder & Gao Yawen & $r_4$ \\
microBarg & CTO & Amaal Kader & $r_5$ \\
\bottomrule
\end{tabularx}
}\caption{Example annotated database.}
        \label{tab:db}
        \vspace{-7mm}
\end{table*}

\section{Model}\label{sec:model}

We introduce the model for the problems that we study in this paper. We 
start by recalling the notion of Boolean provenance as well as a representation form for 
Boolean expressions called Binary Decision Diagrams that will be useful in our proofs. We then define the decision problems that we study in the remainder of the paper.    

\subsection{Boolean Provenance}

We assume familiarity with standard relational database terminology~\cite{abiteboul1995foundations}. 
Let $X$ be a set of Boolean variables. We recall the notion of an annotated database~\cite{green2007provenance,imielinski1984incomplete}, a relational database where each tuple is annotated by a variable from~$X$ (also called $X$-database). In our context, the Boolean variables stand for tuple consent.

\begin{definition} [Annotated database]
	\label{def:xdatabase}
An \emph{annotated database} $\bar{D} = (D, L)$ consists of a relational database $D$ and of a labeling function $L:\funct{tuples}{D} \mapsto X$  mapping each tuple $t$ in $D$ to a variable $\funct{L}{t}\in X$. We will denote $n \coloneqq \card{X}$.
\end{definition}

The premise is that there is a hidden truth associated with each variable, i.e., a ground truth correctness for the corresponding tuple.

\begin{definition}[Truth Valuation]\label{def:val}
A \emph{valuation} is a function $\name{val}:X \mapsto \{\true{},\false{}\}$.  Given an annotated database $\bar{D} = (D, L)$, denote by $D_{\name{val}}\coloneqq \{t\in D \mid \funct{val}{L(t)}=\true{}\}$.
\end{definition}

\begin{example}\label{ex:tab}
Table~\ref{tab:db} outlines an annotated database with three relations: \emph{Acquisitions}, including data on companies acquired by other companies; \emph{Roles}, including data on  roles of different organization members; and \emph{Education}, including data on university alumni. The right-most column shows the variable in $X$ annotating the tuple, and standing for the event that consent is granted for using this tuple (in response to a specific usage request). 
\end{example}

The provenance annotations of the input database can be propagated to the query output, so that Boolean expressions annotating tuples in the query output reflect the possible worlds of the input annotated database for which the output tuple appears in the query result. In particular, Select-Project-Join-Union (SPJU) queries are known to yield provenance in the form of monotone Boolean expressions (without negation), which are computable in DNF (i.e., as disjunctions of conjunctions) in PTIME~\cite{imielinski1984incomplete}.  When relational difference is allowed (SPJUD queries), the provenance expressions may also include negation~\cite{amsterdamer2011limitations} and may no longer be in DNF.

\begin{figure}
{\footnotesize
\begin{Verbatim}
SELECT DISTINCT a.Acquired, e.Institute
FROM Acquisitions AS a, Roles AS r, Education AS e
WHERE a.Acquired = r.Organization AND 
      r.Member = e.Alumni AND a.Date >= 2017.01.01 AND
      r.Role LIKE '\%found\%' AND e.YEAR <= year(a.Date)
\end{Verbatim}
}
\vspace{-3mm}
\caption{Query over the example database}
\label{fig:query}
\vspace{-3mm}

\end{figure} 

For a Boolean expression $\phi$ using only variables from $X$, we will denote by $\funct{val}{\phi}$ the truth value resulting from replacing each occurrence of a variable $x$ in $\phi$ with its truth value $\funct{val}{x}$. We can then define:

\begin{definition} [Provenance for Query results]
	\label{def:queryres}
	Given an annotated database $\bar{D} = (D, L)$ and a query $Q$ over $D$, the query result is $Q(\bar{D})=(Q(D), L')$ where $Q(D)$ is the standard query result and $L':Q(D)\To\name{Bool}[X]$ where $\name{Bool}[X]$ is the semiring of all Boolean expressions over~$X$~\cite{amsterdamer2011limitations,green2007provenance}. For every $t\in Q(D)$, and every valuation $\name{val}$ over $X$, we require that the expression $\phi=L'(t)$ is such that $\funct{val}{\phi}=\true{}$ if and only if $t\in D_{\name{val}}$ (Def~\ref{def:val}).
\end{definition}

\begin{example}\label{ex:query}
Consider the SPJU query in Figure~\ref{fig:query} over the DB in Table~\ref{tab:db}, which returns companies acquired since~2017 along with institutes in which founders of these companies had studied. The query results are shown in Figure~\ref{tab:res} along with their Boolean provenance expressions. In the event that the first two tuples in the Acquisitions relation are assigned $\false{}$ ($\funct{val}{a_0}=\funct{val}{a_1}=\false{}$) there are no query results (With consent) since the join of the Acquisitions table with the other tables would necessarily be empty. If, alternatively, the first tuple of each of the three relations is assigned $\true{}$ (i.e., $\funct{val}{a_0}=\funct{val}{r_0}=\funct{val}{e_0}=\true{}$), the query first result  tuple \textcd{(A2Bdone, U.\ Melbourne)}, derived from these input tuples, has consent.
\end{example}

\subsection{Interactive Evaluation via BDDs}
We focus here on \emph{evaluating} provenance expressions, namely revealing the truth values of expressions by probing individual variables to gradually reveal the underlying valuation. In the settings of \cite{drien2021managing}, probes correspond to asking tuple owners for their consent to use a tuple that they have contributed to a shared database. Naturally, the choice of which variables to probe is non-trivial, and different sequences of probes may lead to revealing the truth values of the provenance expressions much faster than others. 
\begin{example}
 Consider the first tuple of the query result in Table~\ref{tab:res}. Probing $a_0$ corresponds to asking consent to use the first tuple of the Acquisitions relation. If the answer is negative, then $\funct{val}{a_0}=\false{}$
 and we can conclude that the first result tuple is assigned $\false{}$ (in the context of consent management, we do not have consent to use this tuple). If by contrast $\funct{val}{a_0}=\true{}$ then we still do not know the consent status with respect to the result tuple, and we need to continue probing.  
There are many other probing strategies, e.g., we may start by probing $\funct{val}{a_1}$.
\end{example}

\begin{table}
\setlength{\tabcolsep}{3pt}
{\small
\begin{tabularx}{\columnwidth}{ll|l}
                                \toprule
                                Acquired & \multicolumn{2}{l}{Institute} \\
                                \midrule
                                A2Bdone & U.\ Melbourne & $(a_0\!\wedge\! r_0\!\wedge\! e_0)\vee (a_0\!\wedge\! r_1\!\wedge\! e_1) \vee (a_0\!\wedge\! r_2\!\wedge\! e_3)$\\
                                A2Bdone & U.\ Sau Paolo &  $ (a_0\!\wedge\! r_2\!\wedge\! e_2)$ \\
                                microBarg & U.\ Melbourne & $(a_1\!\wedge\! r_3\!\wedge\! e_3)$ \\
                                microBarg & U.\ Sau Paolo & $(a_1\!\wedge\! r_3\!\wedge\! e_2)\vee (a_1\!\wedge\! r_4\!\wedge\! e_4)$ \\
                                \bottomrule
                        \end{tabularx}
        \caption{Result of the example query.}
        \label{tab:res}
        }
        \vspace{-7mm}

\end{table}

A deterministic strategy of which variable to probe, based on the truth values revealed so far, can be generally modelled in terms of a Boolean Decision Diagram (BDD). In a BDD, each inner node stands for the current variable to probe and its outgoing edges reflect the strategy for the next steps if the observed value (probe answer) is \true{} and \false{} respectively. The leaves reflect the truth value by the end of the evaluation.  We state the definition in full generality for a set of Boolean functions.

\begin{definition}[Binary Decision Diagram (BDD)]\label{def:bdd}
A Boolean expression $\phi\in\name{Bool}[X]$ is \emph{constant} equal to $b$, for $b \in \{\true{}, \false{}\}$, if for every valuation $\name{val}$ of $X$, we have $\funct{val}{\phi} = b$.

For $x \in X$ and $b \in \{\true{}, \false{}\}$, the \emph{instantiation of~$x$ to~$b$  in~$\phi$}, written $\phi_{x=b}$, is the Boolean function on $X \setminus
  \{x\}$ obtained by assigning~$b$ to (all occurrences of) $x$ in $\phi$. We generalize this to a set of Boolean functions $\Phi$ by $\Phi_{x=b}=\{\phi_{x=b}\mid \phi\in\Phi\}$.

A BDD on a non-empty set of Boolean functions $\Phi\subseteq \name{Bool}[X]$ is a labeled directed acyclic graph (DAG) $G=(V,E,L_V)$ and distinguished \emph{root node} $v \in V$, where $V$ is a set of nodes labeled by $L_V:V\To X\cup (\Phi\To\{\true{},\false\})$ and $E\subseteq V\times V\times \{\true{},\false{}\}$ is a set of directed edges labeled by a Boolean value. The DAG is inductively defined as follows.
\begin{itemize}
    \item If $\Phi$ has at least one variable~$x$, a BDD for $\Phi$ may consist of a root node $v$ with $L_V(v) = x$, an outgoing edge labeled by \true{} and connecting $v$ to the root of a BDD for $\Phi_{x=\true{}}$, and similarly an outgoing edge and child BDD for $x=\false{}$. The two child BDDs can share some nodes and edges.
    \item Alternatively, a BDD for $\Phi$ may be a single root node $\{v\}$. In this case, we require that each function in $\Phi$ is a constant Boolean expression, and $L_V(v)$ is a function assigning each $\phi\in\Phi$ to its constant value in $\{\true{},\false\}$.
  \end{itemize}
 For convenience, when $\Phi=\{\phi\}$ is a singleton, we identify the label $\phi\mapsto b$ on each leaf with the constant value $b$ it assigns to the single formula $\phi$.\hfill(End of Def.~\ref{def:bdd}.)
\end{definition}

In the present work, we are interested in the number of variables that need to be probed in the worst-case, as reflected by the depth of the BDD.

\begin{definition}[Depth]
Given a BDD $G=(V,E,L_V)$, the \emph{depth} of the BDD is the maximal number of edges on a directed path in $G$ from the root node to a leaf. If $\card{V}=1$ then the depth is $0$. 

Given a set of Boolean expressions $\Phi\subseteq\name{Bool}[X]$, its depth is defined as the minimal depth of any BDD for $\Phi$.
\end{definition}

An immediate observation is that if we restrict ourselves to strategies that do not make obviously useless probes, i.e., do  not issue the same probe twice, then a BDD on $\Phi\subseteq\name{Bool}[X]$ has depth at most~$n=\card{X}$. We call a set of Boolean expressions as \emph{evasive} if its depth is~$n$, i.e., 
if there is no BDD with a better depth than that of the naive BDD querying all variables in order.

\begin{example}\label{ex:evasive} The AND function $x_1 \land \cdots \land x_n$ is evasive. Any decision tree for this function has height $n$, since in particular we must observe all the variables to establish that their conjunction is \true{}. OR is also evasive, by similar arguments.
\end{example}

\begin{example}\label{ex:nonevasive}
 $\psi_0= (w\land x) \lor  (x\land y) \lor (y\land z)$ is non-evasive. A strategy that first observes $x$ and $y$ will at most observe~3 variables: if $\funct{val}{x}=\funct{val}{y}=\true{}$, the evaluation is complete. Otherwise, assume w.l.o.g that $\funct{val}{x}=\false{}$; then~$w\land\false=\false{}$, so we do not need to observe~$w$. The same reasoning applies to~$z$ if $\funct{val}{y}=\false{}$.
\end{example}

\paragraph*{Problem definition}
We are given a set of Boolean expressions $\Phi\subseteq\name{Bool}[X]$ (where $\card{X}=n$). The \texttt{OPT-BDD-DEPTH} problem is defined as the optimization problem of finding a strategy that corresponds to a BDD of minimal depth for $\Phi$. The corresponding decision problem, \texttt{DEC-BDD-DEPTH}, is defined as deciding whether the depth of $\Phi$ is $\leq k$, for a given $k<n$. Finally, \texttt{DEC-BDD-EVASIVE} is the problem of deciding whether $\Phi$ is evasive.

Observe that we can reduce \texttt{DEC-BDD-EVASIVE}, in PTIME, to the complement of \texttt{DEC-BDD-DEPTH}, by deciding \texttt{DEC-BDD-DEPTH} for $k=n-1$, and taking the inverse. In contrast, \texttt{DEC-BDD-DEPTH} may not reduce in PTIME to
\texttt{OPT-BDD-DEPTH}, since this depends on the representation of the optimal strategy: the size of a BDD may be exponential in~$n$ so computing the value for \texttt{DEC-BDD-DEPTH} from the output of \texttt{OPT-BDD-DEPTH} is not in PTIME. 

\section{Results}\label{sec:results}
We next detail our technical results. We start by considering the general case of arbitrary Boolean expressions, and show that deciding \texttt{DEC-BDD-DEPTH} is generally intractable. We then turn to the case of read-once provenance, extensively studied in the context of, e.g., probabilistic databases~\cite{jha2013knowledge,roy2011faster,sen2010readonce}. We show that overall read-once Boolean provenance is always evasive. Then, we show a class of monotone Boolean expression that are ``far'' from evasive, and for which we can provide an optimal probing strategy. Finally, we identify a class of provenance expressions called monotone acyclic graph DNFs, for which we can decide evasiveness in PTIME.  
\subsection{General Provenance Expressions}\label{sec:negation}

We first observe that \texttt{DEC-BDD-DEPTH} is intractable when given a non-monotone Boolean formula, already for CNF and DNF.

\begin{proposition}\label{prop:depthhard}
Deciding \texttt{DEC-BDD-DEPTH}  is coNP-hard, even if the input Boolean expression is in DNF/CNF and the depth upper bound is $k=0$.
\end{proposition}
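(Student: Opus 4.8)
The plan is to exploit the special structure of the depth bound $k=0$. By Definition~\ref{def:bdd}, the only BDD of depth~$0$ is a single leaf node, and such a BDD is legal for a set $\Phi$ exactly when every expression in $\Phi$ is constant. Hence, for a single input expression $\phi$, the instance $\Phi=\{\phi\}$ is a YES-instance of \texttt{DEC-BDD-DEPTH} with $k=0$ if and only if $\phi$ is constant (equal to \true{} or to \false{} under all valuations); if $\phi$ is non-constant, any BDD must contain at least one inner node and so has depth $\geq 1$. The whole question therefore reduces to: can we recognize constant expressions? This is where hardness comes from, since testing constancy subsumes classical validity/unsatisfiability testing.

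I would reduce from a standard coNP-complete problem. For the DNF case, the natural choice is \textbf{DNF-tautology} (deciding whether a DNF formula evaluates to \true{} under every valuation), which is coNP-complete. The subtlety is that ``$\phi$ is constant'' conflates two situations — $\phi$ being a tautology and $\phi$ being unsatisfiable — and for a DNF the latter is decidable in PTIME (a DNF is unsatisfiable iff every term contains a variable and its negation), so I must isolate the tautology case. I would do this with a fresh-variable trick: map a DNF $\phi$ over $X$ to $\phi' \coloneqq \phi \lor y$ for a new variable $y \notin X$, which is again a DNF (just an extra singleton term). Now $\phi'$ is always satisfiable (take $y=\true{}$), so the only way it can be constant is by being a tautology; and $\phi'$ is a tautology iff $\phi$ is (setting $y=\false{}$ recovers $\phi$). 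Thus $\phi$ is a tautology iff $\phi'$ is constant iff $\{\phi'\}$ has depth $\leq 0$, giving the desired PTIME many-one reduction (note $\phi'$ has $\geq 1$ variable, so $k=0<n$ is respected).

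The CNF case is dual and I would reduce from \textbf{CNF-unsatisfiability}, the coNP-complete complement of CNF-SAT. Here it is tautology that is PTIME-detectable (a CNF is valid iff every clause contains a variable and its negation) while unsatisfiability is hard, so I isolate the unsatisfiable case symmetrically: map a CNF $\phi$ to $\phi'' \coloneqq \phi \land y$ for a fresh $y$, still a CNF (an extra singleton clause). Then $\phi''$ is never a tautology (take $y=\false{}$), so it is constant iff it is unsatisfiable, and $\phi''$ is unsatisfiable iff $\phi$ is. Hence $\phi$ is unsatisfiable iff $\{\phi''\}$ has depth $\leq 0$.

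The main obstacle — and really the only nonroutine point — is recognizing and handling this conflation of the constant-\true{} and constant-\false{} cases: a naive reduction $\phi \mapsto \phi$ fails precisely because depth-$0$ expressions include the ``easy'' constant (unsatisfiable DNFs, or valid CNFs) alongside the ``hard'' one. The fresh-variable gadgets resolve this cleanly, and everything else (checking that $\phi',\phi''$ are still in DNF/CNF and computable in PTIME, and the depth-$0$ characterization) is immediate from the definitions.
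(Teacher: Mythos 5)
Your proof is correct and rests on the same key observation as the paper's: depth $0$ holds iff the expression is constant, and constancy testing for DNF/CNF is coNP-hard once the ``easy'' constant case (PTIME-detectable unsatisfiability of a DNF, resp.\ validity of a CNF) is separated from the ``hard'' one. The only difference is how that separation is done --- the paper prechecks the easy case in PTIME and dispatches it directly, whereas you add a fresh variable ($\phi \lor y$, resp.\ $\phi \land y$) to rule it out and obtain a clean many-one reduction; both are valid.
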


\begin{proof}
We first show this 
for CNFs. We reduce from the NP-hard problem of  deciding, given a CNF~$\phi$, if~$\phi$ is satisfiable, i.e., it is not the constant \false{} function. To do this, first check if~$\phi$ is falsifiable, which is easily done in PTIME by checking that it has one  non-trivial clause. If~$\phi$ is not falsifiable, answer that~$\phi$ is (vacuously) satisfiable. Otherwise, if~$\phi$ is falsifiable, then it is satisfiable iff it not constant, i.e., if its depth is greater than~0. We have thus reduced the NP-hard problem of Boolean satisfiability for a CNF to the complement of \texttt{DEC-BDD-DEPTH}, establishing that \texttt{DEC-BDD-DEPTH} is coNP-hard. Similarly, for DNF we can check in PTIME if the expression is satisfiable, and then the DNF is falsifiable iff the expression is not a constant~\true{}, i.e., its depth is~0.
\end{proof}

This also implies that the problem \texttt{OPT-BDD-DEPTH} is coNP-hard, since if we got an optimal 0-depth BDD (which, unlike general BDDs, can be compactly represented) we could decide that the depth is $\leq 0$. In contrast, this still does not imply the hardness of verifying evasiveness, which remains open.

\subsection{Read-once Provenance}\label{sec:ro}
Several works in the field of Probabilistic Databases~\cite{jha2013knowledge,roy2011faster,sen2010readonce} have studied queries that yield \emph{read-once} provenance (or provenance that can be compiled to read-once form), i.e., where each variable occurs at most once in each provenance expression. The reason is that read-once expressions enable easy computation of probabilities. The work of~\cite{drien2021managing} further lists cases when the provenance has \emph{overall} read-once form, i.e., when variables do not repeat across the provenance expressions of different output tuples.

In our context, we show that overall read-once sets of Boolean expressions are evasive. We call a read-once expression \emph{non-simplifiable} if it is either the constant \false{} or \true{}, or contains no occurrences of constants. Indeed, if that is not the case, then we can simplify the expression further by rules such as $\false{} \land \phi = \false{}$, yielding a smaller read-once expression.

\begin{example}\label{ex:ro}
Recall the provenance in Table~\ref{tab:res}. The second and third expressions are non-simplifiable read-once expressions, i.e., they contain no constants and no variable repetitions (also across these two expressions). The first expression is not read-once, as $a_0$ repeats in every term; but it is equivalent to the read-once expression $a_0\land((r_0\land e_0)\lor(r_1\land e_1)\lor (r_2 \land e_3))$; and similarly for the fourth expression. Finally, the entire query result is not overall read-once since variables such as $a_0, a_1, r_3$ repeat across Boolean expressions.
\end{example}

We can show the following by a recursion on the structure of the Boolean expression.

\begin{proposition}
Given a set of Boolean expressions $\Phi$ over a set of variables $X$, if $\Phi$ is overall read-once and each $\phi\in\Phi$ is non-simplifiable, then $\Phi$ is evasive.
\end{proposition}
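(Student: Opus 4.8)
The plan is to prove the sharper quantitative claim that the depth of $\Phi$ is exactly $n = \card{X}$, where we take $X$ to be precisely the set of variables occurring in $\Phi$ (so that overall read-once means each variable occurs exactly once, across the whole set). Since every BDD has depth at most $n$, it suffices to prove the lower bound: \emph{every} BDD for $\Phi$ has depth at least $n$. I would do this by induction on $n$, peeling off one probed variable at a time and exploiting the structure of the read-once expressions. The base case $n=0$ is immediate, as then every expression is a non-simplifiable constant and the depth is $0$.

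For the inductive step, fix an arbitrary BDD $G$ for $\Phi$ with $n \geq 1$. First I would argue that the root of $G$ cannot be a leaf: the variable present in $\Phi$ lies in some non-simplifiable expression $\phi_0$, which is therefore constant-free, and a constant-free read-once formula with at least one variable is a non-constant function (each of its variables is relevant). Hence not all of $\Phi$ is constant, so the root of $G$ probes some variable $x$, and its two children are BDDs for $\Phi_{x=\true{}}$ and $\Phi_{x=\false{}}$.

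The crux is a \textbf{preservation lemma}: there is a value $b \in \{\true{},\false{}\}$ such that $\Phi_{x=b}$, after trivial simplification, is again overall read-once with every expression non-simplifiable, and still contains all $n-1$ variables of $X \setminus \{x\}$ (i.e.\ setting $x=b$ discards no other variable). Granting this, $\Phi_{x=b}$ satisfies the induction hypothesis, so the corresponding child BDD has depth at least $n-1$, and therefore $G$ has depth at least $1+(n-1)=n$. As $G$ was arbitrary, the depth of $\Phi$ is $n$, i.e.\ $\Phi$ is evasive.

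I expect the preservation lemma to be the main obstacle, and I would establish it from the formula-tree structure of the single expression $\phi_0$ containing $x$ — note that every other expression is literally unchanged by $x=b$, precisely because $\Phi$ is overall read-once. Walking up from the leaf $x$, let $g$ be its first $\land$/$\lor$ ancestor, reached through a number of negations of parity $p$. Choosing $b$ so that $b \oplus p$ is the \emph{identity} of $g$ (namely $\true{}$ for $\land$ and $\false{}$ for $\lor$) makes $g$ collapse to its other subtree and \emph{absorbs} the induced constant, so the constant never propagates higher and no sibling subtree — hence no other variable — is deleted; the simplified result is constant-free and read-once. If $x$ has no binary ancestor, then $\phi_0 = \lnot^{\,k} x$ collapses to a non-simplifiable constant, with no other variables to worry about. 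The delicate point is that \emph{absorption} rather than \emph{annihilation} of the constant is exactly what prevents a rule such as $\false{}\land\psi=\false{}$ or $\true{}\lor\psi=\true{}$ from firing and erasing a subtree — which is why the constant-free read-once hypothesis is essential.
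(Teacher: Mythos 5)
Your proposal is correct and follows essentially the same route as the paper: an induction on the number of variables whose engine is exactly your preservation lemma, i.e., the observation that for the unique occurrence of $x$ one of the two instantiations (chosen as the identity of the enclosing $\land$/$\lor$ gate, with polarity flipped through negations) is absorbed rather than annihilating a sibling subtree, leaving a non-simplifiable overall read-once set on the remaining $n-1$ variables. Your write-up is somewhat more explicit than the paper's (base case, non-leaf root, the $\lnot^k x$ degenerate case), but the key idea is identical.
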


Note an easy corollary: sets of Boolean expressions that are not overall read-once but are equivalent to an overall read-once set are also evasive, as evasiveness is preserved under equivalence. 

\begin{proof}
We prove this by induction on the number of variables in $\Phi$. The key observation is that for any non-simplifiable read-once Boolean formula~$\phi$ with~$n$ variables, for every variable $x$ of that formula, then one of $\phi_{x=\true{}}$ and $\phi_{x=\false{}}$ is a non-simplifiable read-once Boolean formula with $n-1$ variables. Indeed, if the one occurrence of~$x$ is positive, then if it is as part of a $\lor$-operation (resp., a $\land$-operation), then replacing $x$ by~\false{} (resp., by~\true{}) and simplifying yields a Boolean function that is still  read-once, is clearly non-simplifiable, and has $n-1$ variables. If the one occurrence of~$x$ is negative, then we do the same but replacing it respectively by~\true{} and~\false{} instead. Since $\Phi$ is overall read-once, we can perform this induction for each of its expressions separately.
\end{proof}

Read-once provenance is not necessarily evasive if 
variables may be repeated \emph{across} expressions: reconsider Example~\ref{ex:intro}, and observe that each expression is non-simplifiable read-once, and $x$ repeats across expressions. We have shown that probing one variable can always be avoided, i.e., the expression set is not evasive.

\subsection{Monotone Boolean Expressions}\label{sec:monotone}
Next, we consider monotone Boolean expressions, without negation. These expressions are of particular interest, since we can show a two-way correspondence between the provenance of SPJUs and the class of monotone Boolean $k$-DNF formulas, where every term (conjunction) consists of at most~$k$ (unique) variables.

\begin{proposition}\label{prop:UCQCharacterization}
\begin{enumerate}
    \item For each SPJU query~$Q$ there exists a value~$k$
    such that for every $X$-database $\bar{D}$, the provenance of each tuple in $Q(\bar{D})$ may be represented in a monotone $k$-DNF form. This form may further be constructed in PTIME in data complexity (i.e., as a function of~$\bar{D}$).
    \item Conversely, for every monotone $k$-DNF formula $\phi$, there exists an SPJ query $Q$ depending only on $k$, an $X$-database $\bar{D}$ whose size is linear in $\phi$ such that the query output $Q(\bar{D})$ is a singleton tuple whose provenance  is equivalent to $\phi$. 
\end{enumerate}
\end{proposition}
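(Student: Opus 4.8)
The plan is to establish the two directions separately, since they are largely independent.

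For part~(1), I would first invoke the standard fact that every SPJU query is equivalent to a union of conjunctive queries (UCQ) $Q = q_1 \cup \cdots \cup q_p$, and then appeal to the semiring characterization of provenance in the Boolean semiring $\name{Bool}[X]$, where the semiring sum is $\vee$ and the product is $\wedge$. Under this characterization, the provenance of an output tuple $t$ is $\bigvee_{\ell} \bigvee_{h} \bigwedge_{A} L(h(A))$, where $\ell$ ranges over the CQs, $h$ ranges over the homomorphisms from the body of $q_\ell$ into $D$ that send the head to $t$, and $A$ ranges over the body atoms of $q_\ell$. Each inner conjunction is a monotone term over at most $k := \max_\ell (\text{number of body atoms of } q_\ell)$ distinct variables, where distinctness uses idempotence $x \wedge x = x$ in case a homomorphism reuses a tuple, and this $k$ depends only on $Q$. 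Since for a fixed query there are at most $\card{D}^{k}$ homomorphisms per CQ, the whole disjunction has polynomially many terms and can be materialized in PTIME in data complexity, yielding the required monotone $k$-DNF.

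For part~(2), given a monotone $k$-DNF $\phi = \bigvee_{i=1}^m t_i$, I would first pad every term to have exactly $k$ variables by repeating one of its variables, which leaves each term unchanged by idempotence; write the padded $i$-th term as $x_{i,1} \wedge \cdots \wedge x_{i,k}$. I then build $k$ unary relations $R_1, \ldots, R_k$ sharing a single attribute, and for each term $i$ insert the tuple $(i)$ into every $R_j$, annotated by the variable $x_{i,j}$. The query is $Q = \pi_{\emptyset}(R_1 \bowtie \cdots \bowtie R_k)$, the natural join on the shared attribute followed by projection onto no attributes (or, if nullary relations are to be avoided, onto a single attribute carrying a fixed constant); this query depends only on $k$. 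The join produces, for each term index $i$, one tuple whose provenance is the product $x_{i,1} \wedge \cdots \wedge x_{i,k} = t_i$; the projection collapses all these tuples into the single output tuple, whose provenance is the sum $\bigvee_i t_i = \phi$, as required. The database consists of $km$ annotated tuples, which is linear in $\card{\phi}$ once $k$ is treated as a constant (as the query depends only on $k$), since $m \le \card{\phi}$.

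I expect the main obstacle to lie in part~(2): an SPJ query has a fixed shape determined by $k$ and cannot natively express a disjunction, yet $\phi$ is a disjunction of terms of differing lengths. The two tricks that resolve this are \emph{(i)} uniformizing term lengths by idempotent padding, so that a single $k$-ary join suffices for all terms, and \emph{(ii)} realizing the outer disjunction through a projection that merges all join witnesses into one output tuple, whose provenance is exactly the disjunction of the witness provenances. Part~(1) is comparatively routine once SPJU is rewritten as a UCQ and the provenance semiring formula is applied; the only points to verify carefully are the width bound $k$ and the PTIME data-complexity claim.
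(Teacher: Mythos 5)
Your proposal is correct. Part~(1) follows essentially the same route as the paper: rewrite the SPJU query as a union of CQs, take $k$ to be the maximal number of joined atoms, and read off the $k$-DNF from the sum-of-products provenance over homomorphisms; your remark that idempotence handles homomorphisms reusing a tuple, and the $\card{D}^k$ bound for the PTIME claim, are the right details to check. Part~(2) reaches the same conclusion by a genuinely different encoding. The paper uses two relations: a unary relation $R$ with one tuple per variable $x$ annotated by $x$, and a $k$-ary relation $S$ with one tuple per (padded) term annotated by the term's first variable, joined via $\name{ans}() \coloneq S(z_1,\dots,z_k), R(z_1),\dots,R(z_k)$, with the spurious extra occurrence of $x_1$ absorbed by idempotence. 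You instead use $k$ unary relations $R_1,\dots,R_k$ indexed by term position, with the $j$-th copy of term $i$ annotated by $x_{i,j}$, and take their natural join. Both constructions rely on the same two tricks you identify (idempotent padding to uniformize term width, and projection to realize the outer disjunction), and both stay within the formal definition of an annotated database since neither requires the labeling function to be injective. Your version is arguably more symmetric and avoids the absorption step; the paper's version keeps a dedicated variable relation $R$ in which each variable annotates exactly one tuple, which fits the consent-management reading (one probe per source tuple) slightly more naturally. Either construction proves the statement.
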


\begin{proof}[Proof (sketch)]
The first part of the proposition holds with~$k$ being the maximal number of relations joined by a conjunctive query within $Q$. To observe that this is the case, note that conjunctions in the provenance construction are associated with joins and disjunctions are associated with projection and union.

For the second part, we again exploit the correspondence between query and Boolean operations. Given a monotone $k$-DNF formula~$\phi$, consider a DB $\bar{D}$ with two relations. Relation $R$ encodes the variables of~$\phi$, where for each variable~$x$ in~$\phi$ we have a tuple~$R(x)$ (using $x$ as a value) annotated by~$x$. Relation~$S$ encodes the terms of~$\phi$ such that for each term $x_1\wedge x_2\wedge\dots\wedge x_k$ in~$\phi$ we have a tuple $S(x_1,x_2,\dots,x_k)$ (using $x_1, x_2,\dots, x_k$ as values) and annotated by $x_1$. If a term is of size $<k$ we can repeat one of its variables to obtain an equivalent term of size exactly~$k$. The query $Q$ is a binary CQ fixed for $k$ (having no unions, using only equality joins and a projection on all variables): $\name{ans}():-S(z_1,\dots,z_k), R(z_1), \dots, R(z_k)$. By this construction, each tuple in the join result corresponds to a term in~$\phi$ and has a provenance of the form of a conjunction $x_1\wedge x_2\wedge\dots\wedge x_k$, where every $x_i$ stands for an original variable, and the repeated occurrence of $x_1$ by the join with $S$ is absorbed. Then projecting out all the variables yields the disjunction of these conjunctions, which is equivalent to~$\phi$. 
\end{proof}

The hardness result of Prop.~\ref{prop:depthhard} does not immediately apply to the case of monotone functions, which we leave open. 
Let us instead study bounds that we can obtain in special cases. 
For instance, to prove that an expression in DNF is \true{}, we need to prove, for one of its terms (conjunctions), that each of its variables are \true{}; and similarly for CNF clauses for proving \false{}. Therefore, the depth of a monotone expression is bounded from below by the maximum size (number of distinct variables) of a term in its DNF and of a clause in its CNF (assuming the expressions are simplified to avoid subsumed terms/clauses). As stated by the following theorem, there exists a class of Boolean expressions that have a depth linear in this bound, and exponentially smaller than the number of variables.

\begin{theorem}\label{thm:logoptimal}
For arbitrarily large integers~$n$, there is a monotone Boolean expression of size~$\Theta(n)$, such that in DNF its term size (number of distinct variables in a conjunction) is at most~$k=\oof{\log{n}}$, and its depth is $2k-1=\oof{\log{n}}$.

\end{theorem}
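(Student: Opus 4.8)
The plan is to exhibit an explicit family $\phi_1,\phi_2,\dots$ defined by the recursion $\phi_1 \coloneqq x_1$ and $\phi_k \coloneqq (x_k \land y_k) \lor (x_k \land \phi_{k-1}^{L}) \lor (y_k \land \phi_{k-1}^{R})$, where $x_k,y_k$ are fresh variables and $\phi_{k-1}^{L},\phi_{k-1}^{R}$ are two variable-disjoint copies of $\phi_{k-1}$. This generalizes Example~\ref{ex:nonevasive}: taking $\phi_{k-1}^{L},\phi_{k-1}^{R}$ to be single variables, $\phi_2$ is exactly $\psi_0$. First I would record the easy bookkeeping. The number of variables satisfies $N_k = 2 + 2N_{k-1}$ with $N_1 = 1$, so $N_k = 3\cdot 2^{k-1} - 2 = \thetaof{2^k}$; writing $n = N_k$ gives $k = \thetaof{\log n}$, and the factored expression has size $\thetaof{n}$. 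Expanding the recursion into DNF, every term is either the central $x_k \land y_k$ or a term of $\phi_{k-1}$ with one top variable prepended, so the maximal term size obeys $t_k = 1 + t_{k-1}$, $t_1 = 1$, giving $t_k = k = \oof{\log n}$ as required.

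The second step is the upper bound, a strategy (BDD) of depth $\le 2k-1$, described recursively in the spirit of Example~\ref{ex:nonevasive}: probe $x_k$ and $y_k$ first. If both are \true{} the central term fires and $\phi_k = \true{}$; if both are \false{} then $\phi_k = \false{}$; in these two cases we stop after $2$ probes. If $x_k = \true{},y_k = \false{}$ then $\phi_k$ simplifies to $\phi_{k-1}^{L}$, and symmetrically to $\phi_{k-1}^{R}$ when $x_k = \false{},y_k = \true{}$, so we recurse on a copy of $\phi_{k-1}$. This yields the depth recurrence $d_k \le 2 + d_{k-1}$ with $d_1 = 1$, hence depth $\le 2k-1$.

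The third and hardest step is the matching lower bound $\mathrm{depth}(\phi_k) \ge 2k-1$. Here I would stress that the generic bound recalled just before the theorem (depth is at least the largest DNF term and CNF clause) is \emph{insufficient}: for $\phi_k$ both the maximal term and the maximal prime implicate have size exactly $k$ — for $\phi_2 = \psi_0$ the prime implicates are precisely the size-$2$ vertex covers of the path $w\!-\!x\!-\!y\!-\!z$ — so this bound only gives $\ge k$, not $2k-1$. Instead I would prove the recurrence $\mathrm{depth}(\phi_k) \ge 2 + \mathrm{depth}(\phi_{k-1})$ by an adversary argument: against any BDD, the adversary keeps $\phi_k$ non-constant on the current subcube for $2k-2$ probes, thereby forcing a $(2k-1)$-th. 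It runs the inductive adversary on one activated copy, say $\phi_{k-1}^{L}$, to force $2k-3$ probes there while keeping it undetermined, and answers the top variables so as to keep them \emph{both} pivotal, aiming at the configuration $x_k = \true{},y_k = \false{}$ in which $\phi_k \equiv \phi_{k-1}^{L}$; the invariant is that while the activated copy is undetermined, the value of $\phi_k$ still depends on each unprobed top variable, so no leaf can be reached before $x_k$, $y_k$, and enough of the copy have all been probed.

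The main obstacle is exactly this lower-bound adversary, for two reasons. First, the answers to the top variables are delicate: committing prematurely — e.g.\ answering a top variable so that one sub-copy becomes irrelevant and then letting that copy be driven to \false{} — lets a clever BDD settle $\phi_k$ in only $2k-2$ probes, so the adversary must keep the activated copy \emph{alive} rather than resolved, precisely so that the remaining top variable stays pivotal. Second, to glue the recursion I expect to need a strengthened induction hypothesis: not merely that $\phi_{k-1}$ forces $2k-3$ probes, but that its adversary can be \emph{steered} so that the final value is still attainable as either \true{} or \false{}, since the last top probe must be able to flip $\phi_k$. I would therefore carry the induction with this ``either value remains attainable'' clause, and check that arbitrary interleavings of probes to the top variables and to the two sub-copies are absorbed by a single potential that starts at $2k-1$ and decreases by at most one per probe.
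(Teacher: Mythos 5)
Your construction is essentially the paper's: it defines $\psi_0 = (w\wedge x)\vee(x\wedge y)\vee(y\wedge z)$ and $\psi_{i+1} = (u_i\wedge\psi_i)\vee(u_i\wedge v_i)\vee(v_i\wedge\psi_i')$ with a fresh disjoint copy $\psi_i'$, which is your $\phi_k$ up to re-indexing of the base case, and its (sketched) argument consists of exactly your first two steps --- exponential variable count, DNF term size $k$, and the probe-the-two-top-variables-first strategy giving depth at most $2k-1$. The matching lower bound $\geq 2k-1$, which you correctly identify as the hard part and for which you only outline an adversary argument with a strengthened induction hypothesis, is not addressed at all in the paper's proof sketch, so on that point you go beyond the paper (though your adversary argument remains a plan rather than a completed proof).
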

\begin{proof}[Proof (sketch)]
  Define the monotone DNF formula $\psi_0 = (w\wedge x)\vee(x\wedge y)\vee(y\wedge z)$ (as in Example~\ref{ex:nonevasive}) and recursively define $\psi_{i+1}= (u_i\wedge \psi_i) \vee (u_i\wedge v_i) \vee (v_i\wedge \psi'_{i})$ where $u_i$, $v_i$ are fresh variables and $\psi'_{i}$ is obtained by consistently replacing the variables of $\psi_i$ by fresh variables. In DNF, the resulting term size is $k=i+2$. We can show that $\card{\vars{\psi_i}}$ is exponential in $i$ since we double the expression size at each level, and that there exists a BDD $\psi_i$ whose expected cost is $\oofi{i}$ (by probing first the $u_i, v_i$ variables).
\end{proof}

The theorem implies that exists a class of Boolean expressions that are ``far'' from evasive, namely, the optimal strategy probes only a logarithmic number of variables in the worst case. Moreover, the optimal number of probes for this class is close to the overall lower bound on the number of probes  for any $k$-DNF.
Our proof is constructive: not only do we show the existence of an optimal strategy, but the proof also provides the strategy itself.

\subsection{Monotone Acyclic Graph DNFs}\label{sec:graph}
As a first step towards understanding monotone DNFs in general, we next focus on a subclass of Boolean  expressions which we call \emph{monotone acyclic graph DNFs}. The subclass is defined as follows:

\begin{definition}[Monotone acyclic graph DNFs]
A monotone graph DNF $\phi$ is a monotone DNF where each term includes at most two variables (i.e., 2-DNF). We identify $\phi$ to a graph $G$ whose nodes are the variables of $X$ and which has one edge $\{x,y\}$ per term $x\land y$. 
We say that $\phi$ is \emph{acyclic} if $G$ is. For convenience, we will often abuse notation and identify $\phi$ and $G$ (but paying attention to the fact that $G$ does not represent terms with one single variable), and we will often consider $G$ as a rooted tree, by picking some root.
\end{definition}
 
\begin{example}\label{ex:graphdnf} The expression $\psi_0= (w\land x) \lor  (x\land y) \lor (y\land z)$, shown in Ex.~\ref{ex:nonevasive} to be non-evasive, is a monotone acyclic graph DNF. 
\end{example}

Characterizing which queries guarantee this form of provenance is left open. However, given provenance of this shape, we can show the following.

\begin{theorem}
  \label{thm:acyclic}
  Given a monotone acyclic graph DNF, deciding \texttt{DEC-BDD-EVASIVE} is in PTIME.
\end{theorem}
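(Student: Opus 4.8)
The plan is to reduce evasiveness to a local, recursively-checkable ``saving'' condition on the tree and to evaluate it by a dynamic program. First I would record the two facts that drive everything: (i) the depth of any Boolean function is at most its number of relevant variables, with equality exactly when it is evasive; and (ii) instantiating a variable of a graph DNF keeps us inside a slightly larger class. Concretely, setting $x=\false{}$ deletes the node $x$ and all its incident edges (terms), whereas setting $x=\true{}$ turns every term $x\land z$ into the single-variable term $z$, and by absorption $z\lor(z\land w)=z$ this in turn deletes every edge incident to such a $z$. Hence the natural class to work in is ``forests of edge-terms together with a set of isolated single-variable terms'' (the latter also covering any original one-variable terms, which $G$ does not display), and this class is closed under instantiation after simplification.

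Using (i), $\phi$ is non-evasive iff some probe $x$ makes \emph{both} children $\phi_{x=\true{}}$ and $\phi_{x=\false{}}$ have depth at most $n-2$, i.e.\ each child either loses a relevant variable or is itself non-evasive. The seed of the pattern is the purely local way a child can lose a variable: if $x$ has a leaf-neighbour $y$, then $\phi_{x=\false{}}$ makes $y$ vanish; and if $x$ is adjacent to a node $z$ that has its own leaf-neighbour $w\ne x$, then in $\phi_{x=\true{}}$ the node $z$ becomes a singleton whose absorption deletes the edge $zw$ and makes $w$ vanish. Thus a node $x$ with both a leaf-neighbour and a neighbour carrying another leaf --- equivalently, two leaves of $G$ at distance exactly $3$ --- immediately witnesses non-evasiveness, since probing it saves a variable on each side. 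I would prove this sufficiency direction first, as it is a direct verification through absorption.

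The converse is the crux, and it turns out to be genuinely global rather than local: for the bare path $P_m$ one checks that $P_4$ and $P_7$ are non-evasive while $P_5$ and $P_6$ are evasive, so evasiveness of a path depends on $m \bmod 3$, and no single bounded subgraph can decide it (in $P_7$ a good probe makes one child lose a variable while the \emph{other} child is only recursively non-evasive). The plan is therefore to convert the recursive characterization into a polynomial dynamic program over the rooted tree that assigns each node a label from a bounded set --- intuitively recording, for the subtree, which of ``loses a variable / is non-evasive / neither'' holds for the relevant conditioned expressions, so that the true-branch singletons and absorptions coming from above can be tracked locally --- and the sought non-evasiveness pattern is then the appearance of a good label configuration along some edge. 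Equivalently, one may look for evasiveness-preserving local reduction rules that shrink a subtree by three nodes at a time (matching the observed period) down to a kernel on which a finite pattern suffices.

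The step I expect to be hardest is pinning down this bounded state and proving its correctness in both directions. The difficulty is that an optimal BDD may probe nodes in an order unrelated to the tree structure, so a subtree-local summary must capture exactly the information a parent needs about how the child subtree reacts to the singleton-and-absorption effects arriving from above; making this summary simultaneously finite and faithful, and reconciling it with the mod-$3$ behaviour exhibited by paths, is the technical heart of the argument. Once the state is correct, membership in PTIME is immediate: the dynamic program is a single bottom-up pass over the tree and detecting the pattern is a linear scan.
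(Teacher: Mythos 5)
Your setup is sound and matches the paper's framing: the recursion ``$\phi$ is non-evasive iff some probe $x$ makes both $\phi_{x=\true{}}$ and $\phi_{x=\false{}}$ non-evasive on the remaining variables,'' the description of how instantiation acts on the graph (node/edge deletion for \false{}, singleton-creation plus absorption for \true{}), the sufficient local witness (two leaves at distance $3$), and the observation via paths that the characterization cannot be decided by any bounded-radius local condition are all correct, and they are exactly the right preliminary observations. However, the proposal stops precisely where the proof begins. The entire technical content of the theorem is a concrete, finitely checkable certificate and a proof that it characterizes non-evasiveness in both directions; you explicitly defer ``pinning down this bounded state and proving its correctness in both directions'' as the hardest step, so no certificate, no dynamic-programming state, and no correctness argument is actually given. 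Without these there is neither an algorithm nor a proof.

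For comparison, the paper supplies the missing object as a recursively defined \emph{non-evasiveness pattern} on the rooted tree: a pattern rooted at $x$ requires, for every neighbour $y$ of $x$, a chosen grandchild $w_y$ of $y$ carrying (recursively) a pattern for the corresponding subformula, the base case being a variable occurring in no term; existence is checkable bottom-up in PTIME. The easy direction (pattern $\Rightarrow$ non-evasive) is essentially the strategy you sketch: probe the neighbours of $x$; if all are \false{} you save $x$, and if some $y$ is \true{} the induced singleton absorbs the edge into $w_y$'s parent and you recurse into the subtree at $w_y$. The hard direction is the statement that if $\phi$ has no pattern then for \emph{every} variable $x$ there is a value $b$ such that no connected component of the simplified $\phi_{x=b}$ has a pattern; iterating this along a branch of any BDD forces all $n$ probes. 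Establishing it requires a separate combinatorial induction (that a node with both a ``special'' child and a ``special'' grandchild forces a pattern avoiding that node), and this is exactly where the mod-$3$ behaviour of paths gets reconciled with a finite per-subtree summary. Since your proposal contains neither the certificate nor either direction of its correctness, it has a genuine gap at the heart of the argument.
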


We prove this theorem in the rest of the section.
The high-level idea is that we identify a \emph{non-evasiveness pattern}, defined recursively on the structure of the graph DNF. We explain that finding such a pattern is in PTIME, and that such a pattern is present iff the formula is non-evasive,
leading to the conclusion of Theorem~\ref{thm:acyclic}.

Our theorem holds for any acyclic monotone graph DNF, but for convenience, we will eliminate some anomalies that can be handled separately (and in PTIME), by assuming:
\begin{itemize}
    \item All the variables of $X$ occur in~$\phi$ (otherwise, it is trivially non-evasive).
    \item The graph is connected (otherwise, the formula is evasive iff each connected component is).
    \item There are no subsumed terms, i.e., if we have a singleton term $x$, then there are no terms $x \land y$ (as these can be removed).
\end{itemize}

We now present the pattern that characterizes non-evasiveness:

\begin{definition}
  Let $\phi$ be a monotone acyclic graph DNF, and let $x$ be a variable. Let $T$ be the graph of~$\phi$ represented as a tree rooted at~$x$. For $y$ a variable of~$\phi$, we denote by $\phi_{x:y}$ the monotone acyclic graph DNF obtained by restricting $\phi$ to the variables in the subtree of~$T$ rooted at~$y$, and the terms involving only these variables.
  
  A \emph{non-evasiveness pattern} $\Pi$ 
  rooted at $x$ for $\phi$ is a labeled tree inductively defined as follows:
  \begin{itemize}
    \item If $x$ is a variable that does not appear in any term, then $\Pi$ is a single
      leaf node $n$ labeled by~$x$.
    \item Otherwise, $\Pi$ has a root node $n$ labeled with~$x$ and with the
      following children: for each variable $y$ co-occurring in a term with
      $x$ (i.e., every child of~$x$ in~$T$), 
      we choose a grandchild $w_y$ of~$y$ (we require that such a grandchild exists), and $n$ has a child node consisting of a non-evasiveness pattern rooted at $w_y$ for $\phi_{x:w_y}$.
  \end{itemize}
\end{definition}

\begin{example}

The formula $\psi_0$ in Example~\ref{ex:nonevasive} has a non-evasiveness pattern, for instance the one with root labeled $x$ and with one child labeled $z$. Generalizing from this example to the ``path-shaped'' monotone acyclic graph DNFs of the form $\psi_n = (x_0 \land x_1) \lor (x_1 \land x_2) \lor \cdots \lor (x_{n-1} \land x_{n})$, there is a non-evasiveness pattern iff $n$ is divisible by~$3$. $\psi_0$ is the path expression with $n=3$, and is indeed non-evasive.
\end{example}

The existence of a non-evasiveness pattern for~$\phi$ can be checked
in PTIME recursively: we successively root $\phi$ at each variable $x$ and compute bottom-up,
for each variable $y$, if there is a non-evasiveness pattern rooted at~$y$. Note also that the choice of root of a non-evasiveness pattern $\Pi$ among the variables occurring in~$\Pi$ is inessential: for any node $y$ labeling a node of~$\Pi$, we can re-root $T$ and $\Pi$ at $y$ and obtain a non-evasiveness pattern rooted at~$y$.

We will show that a non-evasiveness pattern for $\phi$ exists iff $\phi$ is non-evasive. We first show the easy direction:

\begin{lemma}\label{lem:patterntononevasive}
  Given a monotone acyclic graph DNF $\phi$, if $\phi$ has a non-evasiveness pattern $\Pi$, then it
  is not evasive.
\end{lemma}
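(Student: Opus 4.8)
The plan is to recast non-evasiveness as the construction of a BDD for $\phi$ of depth at most $n-1$: since a BDD here never re-probes a variable, it suffices to exhibit a strategy that, on every root-to-leaf branch, leaves at least one of the $n$ variables unprobed. I would build this strategy by induction on $n=\card{X}$, driven by the recursive structure of the pattern $\Pi$, using two elementary facts about monotone graph DNFs: (i) as soon as both endpoints of some edge-term are found \true{}, we have $\funct{val}{\phi}=\true{}$ and may halt, leaving every remaining variable unprobed; and (ii) a variable all of whose incident edge-terms already have a \false{} endpoint is irrelevant and may be skipped. I would also use the tree-cut property that, once we root the graph at $x$, the only edge leaving the subtree rooted at a node $w$ is the edge from $w$ to its parent.

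Concretely, root $T$ at the pattern root $x$, with children $y_1,\dots,y_m$, and recall that $\Pi$ supplies, for each $y_i$, an intermediate child $c_i$ of $y_i$ and a grandchild $w_i$ (a child of $c_i$) carrying a subpattern $\Pi_i$ for $\phi_{x:w_i}$, the restriction of $\phi$ to the subtree $S_i$ rooted at $w_i$. The strategy first probes all the children $y_1,\dots,y_m$ and then splits. If all $y_i$ are \false{}, then every edge-term incident to $x$ is \false{}, so by (ii) we skip $x$ and probe only the remaining $n-1$ variables, which suffices since all surviving terms lie strictly below the $y_i$. Otherwise some $y_{i^*}$ is \true{}; we probe $x$, and if $x$ is \true{} the term $x\land y_{i^*}$ fires and we halt by (i). If $x$ is \false{}, we probe $c_{i^*}$: if it is \true{} the term $y_{i^*}\land c_{i^*}$ fires and we halt by (i); if it is \false{}, the boundary edge $c_{i^*}\land w_{i^*}$ is dead, so by the tree-cut property $\phi$ depends on $S_{i^*}$ only through $\phi_{x:w_{i^*}}$. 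We then probe every variable outside $S_{i^*}$ and invoke the induction hypothesis on $\Pi_{i^*}$ to evaluate $\phi_{x:w_{i^*}}$ with a BDD that skips at least one variable of $S_{i^*}$; composing these disjoint pieces yields a branch of length at most $(n-\card{S_{i^*}})+(\card{S_{i^*}}-1)=n-1$.

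Thus in every branch at least one variable is skipped — $x$ in the first case, the halted suffix in the short-circuit cases, and the inductively skipped variable of $S_{i^*}$ in the last — so the BDD has depth at most $n-1$ and $\phi$ is non-evasive. The role of the pattern's jump to grandchildren is exactly to furnish the two intermediates $y_{i^*},c_{i^*}$ whose probing yields either a \true{}-\true{} edge (short-circuit) or a dead boundary edge that detaches $S_{i^*}$; this is what the divisibility-by-three condition records for the path expressions.

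The main obstacle is not the depth bookkeeping, which is a clean disjoint composition, but verifying correctness and exhaustiveness in the presence of the ``bushy'' off-pattern parts of the tree: the siblings of $c_i$, the other grandchildren, and the deeper branches all contribute extra edge-terms. I must check that, in each case, probing the prescribed variables (and, in the last case, everything outside $S_{i^*}$) really determines $\funct{val}{\phi}$ — this rests on the tree-cut property together with the connectivity assumption, which rules out isolated singleton terms so that a subtree-isolated variable contributes the constant \false{} and may indeed be skipped. I would finally confirm the base case ($x$ isolated, $\phi$ the constant \false{}, depth $0=n-1$) and note that re-rooting the pattern, which the definition permits, does not affect the argument.
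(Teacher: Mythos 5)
Your proof is correct and follows essentially the same strategy as the paper's: probe the children $y_1,\dots,y_m$ of the pattern root first, skip $x$ when they are all \false{}, and otherwise detach the grandchild subtree $\phi_{x:w_{i^*}}$ and recurse on its subpattern. The only (immaterial) differences are that you explicitly probe $x$ and the intermediate node $c_{i^*}$ with short-circuit halting where the paper instead observes that $y_{i^*}=\true{}$ turns $c_{i^*}$ into a singleton term subsuming $c_{i^*}\land w_{i^*}$ (detaching the component regardless of $c_{i^*}$'s value), and that you induct on $\card{X}$ rather than on the size of $\Pi$.
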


\begin{proof}
  We show the claim by induction on the number of nodes of~$\Pi$. If $\Pi$
  is just a singleton node, then $\phi$ is a formula where $x$ does not appear in any term, so does not need to be queried.  Otherwise, let $x$ be the variable labeling the root node~$n$ of~$\Pi$,
  let $y_1, \ldots, y_m$ be the children of~$x$ in~$T$ (with $m \geq 1$), and let $w_1, \ldots, w_m$ be the variables labeling the corresponding child nodes of~$n$ in~$\Pi$; let $z_1, \ldots, z_m$ be their respective parents (each $z_i$ is a child of $y_i$). 

  Let us build a BDD witnessing that $\phi$ is not evasive by asking
  about all the variables $y_1, \ldots, y_m$. If they all evaluate to~$\false$, then all terms involving~$x$ are falsified, so we need not query~$x$ and the remaining depth is less than the number of remaining variables.
  Hence, to conclude, it suffices to argue that the formula is not evasive when some $y_i$ evaluates to true, say $\phi_{y_{i_0} = \true}$. 

  Now, in $\phi_{y_{i_0} = \true}$, the term $y_i \land z_i$ simplifies to the
  singleton term $z_i$, which subsumes the term $z_i \land w_i$.
  Thus, after performing partial evaluation, the resulting formula has a connected component that is precisely $\phi_{z_i:w_i}$.   Now, the subtree of~$\Pi$ rooted at~$n_i$ is by definition 
  a non-evasiveness pattern for $\phi_{z_i:w_i}$ rooted at~$w_i$.
  Hence, by induction hypothesis, we know that $\phi_{z_i:w_i}$  is not evasive, 
  so $\phi_{y_{i_0} = \true}$ has a connected component that is not evasive and is itself non-evasive, concluding the proof. 
\end{proof}

We also claim that the converse holds, via:

\begin{lemma}\label{lem:nopatterntoevasive}
  If a monotone acyclic graph DNF $\phi$ has no non-evasiveness pattern, then for any variable $x$ of~$\phi$, there exists
  $b \in \{\true{}, \false{}\}$ such that, after performing partial evaluation on~$\phi_{x=b}$, no connected component has a non-evasiveness pattern.
\end{lemma}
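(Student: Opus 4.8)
The plan is to prove the contrapositive: assume there is a variable~$x$ such that, for \emph{both} $b=\true{}$ and $b=\false{}$, some connected component of~$\phi_{x=b}$ has a non-evasiveness pattern, and construct a non-evasiveness pattern for~$\phi$ itself, contradicting the assumption that $\phi$ has none. First I would root the tree~$T$ of~$\phi$ at~$x$ and make the two families of components explicit. Instantiating $x=\false{}$ deletes exactly the terms incident to~$x$, so it detaches~$x$ and leaves as components the subtrees hanging from the children $y_1,\ldots,y_m$ of~$x$, i.e.\ the formulas $\phi_{x:y_i}$ (a leaf child~$y_i$ becomes a free variable, whose trivial leaf pattern counts). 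Instantiating $x=\true{}$ turns each term $x\land y_i$ into the singleton~$y_i$, which subsumes every downward term $y_i\land z$; this detaches~$x$ and each~$y_i$ at once, leaving the evasive singleton components $\{y_i\}$ (which never carry a pattern) together with the subtrees $\phi_{x:z}$ hanging from the grandchildren~$z$ of~$x$. Thus the branch hypotheses yield a \emph{good child}~$y_a$ with $\phi_{x:y_a}$ admitting a pattern, and a \emph{good grandchild}~$z_b$ with $\phi_{x:z_b}$ admitting a pattern.

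Next I would assemble these into a pattern for all of~$\phi$, rooted not at~$x$ but at the good child~$y_a$. Re-rooting~$T$ at~$y_a$ makes~$x$ one of the children of~$y_a$, alongside~$y_a$'s original children. A pattern rooted at~$y_a$ must supply, for every co-occurring variable (child) of~$y_a$, a suitable grandchild carrying a sub-pattern: for $y_a$'s original children these grandchildren and sub-patterns are read off directly from the pattern of~$\phi_{x:y_a}$; for the remaining child~$x$, a grandchild of~$x$ in the $y_a$-rooted tree (an original grandchild of~$x$ lying outside $y_a$'s subtree) is exactly what is required, and the good grandchild~$z_b$ together with the pattern of~$\phi_{x:z_b}$ supplies it. Checking that the subsumption in the $x=\true{}$ branch makes $\phi_{y_a:z_b}$ coincide with $\phi_{x:z_b}$ — mirroring the computation in the proof of Lemma~\ref{lem:patterntononevasive} — confirms that the pieces glue into a single valid pattern rooted at~$y_a$.

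The main obstacle is that the hypotheses only guarantee a pattern at \emph{some} root of each component, whereas the construction needs the component patterns rooted at the specific nodes~$y_a$ and~$z_b$ that attach them to the rest of~$\phi$. This is not automatic: admitting a pattern is \emph{not} preserved under arbitrary re-rooting (for a path-shaped graph DNF that has a pattern, only the vertices at positions divisible by three from an end are valid roots), and the re-rooting remark only lets one move the root among the vertices already labeling the pattern. Closing this gap is the crux, and I expect it to require either strengthening the inductive statement so that each witness comes with a pattern rooted at its attachment point, or selecting the witnesses more carefully — possibly rooting the global pattern at an internal witness deep inside a component rather than at~$y_a$, then threading the pattern up toward~$x$ and across to the other branch. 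A secondary, easier point, dischargeable by case analysis, is guaranteeing that~$z_b$ can be taken outside the subtree of~$y_a$ (automatic when~$y_a$ is a leaf). Together with Lemma~\ref{lem:patterntononevasive}, this lemma furnishes the adversary's move in the induction establishing that pattern-free acyclic monotone graph DNFs are evasive.
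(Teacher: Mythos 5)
Your overall architecture matches the paper's: argue by contraposition, note that the components of $\phi_{x=\false{}}$ are the child subtrees $\phi_{x:y_i}$ while those of $\phi_{x=\true{}}$ are (after subsumption) the grandchild subtrees $\phi_{x:z_j}$ together with useless singletons, and extract a good child and a good grandchild of~$x$ to glue into a pattern for all of~$\phi$. But the proposal stops exactly where the work begins: you correctly name the re-rooting problem as ``the crux'' and then offer two speculative escape routes without carrying either out. The paper closes this gap with a short dichotomy you are missing. If the pattern $\Pi_i$ found in the good component does \emph{not} label its attachment vertex $y_i$, then $\Pi_i$ is already a non-evasiveness pattern for the whole of~$\phi$: re-attaching the rest of the tree adds no new neighbor to any vertex labeled in $\Pi_i$, so the recursive condition is undisturbed, and the contradiction is immediate. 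Otherwise $y_i$ does label a node of $\Pi_i$, and the legitimate re-rooting (which, as you note, is only valid among vertices already labeling the pattern) yields a pattern rooted at $y_i$, i.e., $y_i$ is ``special''. The same dichotomy applies to the grandchild $z_j$. No strengthening of the inductive statement is needed --- just this case split.

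The second problem is that you dismiss as ``secondary'' and ``easier'' the case where the good grandchild $z_b$ lies inside the subtree of the good child $y_a$; this is in fact the harder half of the argument, and your suggested fix (``guaranteeing that $z_b$ can be taken outside the subtree of $y_a$'') is not something one can guarantee. When $z_b$ is a child of $y_a$, you cannot use $z_b$ as the witness grandchild for the extra child $x$ in a pattern rooted at $y_a$. The paper handles this with a separate observation proved by induction on the height of the tree: since $y_a$ is special and $z_b$ is its neighbor, $z_b$ has a special grandchild $y'$; the intermediate vertex $w$ between $z_b$ and $y'$ then has a special child ($y'$) and, because $z_b$ is special and $w$ is its child, also a special grandchild; one recurses on the strictly shorter subtree rooted at~$w$ to obtain a pattern avoiding $w$, which then extends to a pattern for all of~$\phi$. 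Without this descent argument the gluing step does not go through in general, so as written the proposal has a genuine gap on both counts.
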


This implies that if $\phi$ does not have a non-evasiveness pattern, then it is
evasive. Indeed, applying repeatedly the above lemma justifies that any BDD has a branch leading always to a formula without a non-evasiveness pattern,
until we reach a formula with no variables (which is constant and evasive).
Let us prove the lemma:

\begin{proof}
Root $\phi$ at the variable $x$, yielding a tree~$T$. We say that
  a variable $y$ of~$T$
  is \emph{special} if there is a non-evasiveness pattern for $\phi_{x:y}$ rooted at~$y$.
  Because we assumed $\phi$ has no subsumed terms,
  all leaves of $T$ are special, and we know by definition of a non-evasiveness pattern that for every special
  variable $x$ and child~$y$ of~$x$ there is a grandchild of~$y$ that is also
  special. We make the following observation:

  \begin{observation}
    \label{obs:badcase}
    If $x$ has a special child and a special grandchild, 
    then the whole graph DNF~$\phi$ has a non-evasiveness pattern in which $x$ does not appear.
  \end{observation}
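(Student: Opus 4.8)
The plan is to route everything through a single \emph{building block}: if some variable $v$ of the tree $T$ (rooted at $x$) has a special child $a$ and, reached through a \emph{different} child $v'$ of $v$, a special grandchild $b$, then $\phi$ has a non-evasiveness pattern in which $x$ does not appear. I would prove this building block by direct construction, rooting the pattern at $a\ne x$. In the tree re-rooted at $a$, the children of $a$ are its original downward children together with its former parent $v$. Each downward child is handled because $a$ is special, so it has a special grandchild to jump to and recurse into; and for the single upward child $v$ we jump \emph{over} $v$ to the grandchild $b$ (which sits at distance two below $v$, through $v'\ne a$) and recurse into the subtree of $b$, legitimate since $b$ is special. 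The only subtrees ever entered are those of $a$ and of $v'$, both lying strictly below $v$; as $v$ is either $x$ or a descendant of $x$, neither subtree contains $x$, so $x$ never labels a node of the pattern.

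With the building block established, I would split on where the special grandchild $g$ of $x$ lies. If $g$ sits in the subtree of a child $p\ne c$ of $x$, we are immediately done by applying the building block at $v=x$ with $a=c$ and $b=g$: here $c$ and $p$ are distinct children of $x$, so the ``different child'' requirement holds, and the resulting pattern rooted at $c$ jumps over $x$ to $g$.

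The remaining case---and the main obstacle---is when $g$ hangs off $c$ itself, i.e.\ $g$ is a child of $c$ and both $c$ and $g$ are special. Rooting at $c$ now fails, because the upward edge from $c$ to $x$ may offer no grandchild to jump to (for instance when $x$ has a single neighbor). I would instead perform a descent. Call $(v,s)$ a \emph{good pair} if $s$ is a child of $v$ and both variables are special; by hypothesis $(c,g)$ is good. Given a good pair, I use that $v$ is special to obtain a special grandchild $h$ of $s$, reached through a child $m$ of $s$, and I use that $s$ is special to obtain a special grandchild $h'$ of $m$. Then $m$ has a special child $h$ and a special grandchild $h'$: if $h'$ lies in a branch of $m$ distinct from $h$, the building block fires at $v=m$ and we conclude; otherwise $h'$ is a child of $h$, so $(h,h')$ is a good pair strictly deeper in $T$, and I recurse.

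Since $T$ is finite and the good pair descends strictly at each step, the descent must terminate by triggering the building block, producing a pattern avoiding $x$. The steps needing the most care are verifying that the re-rooted object is a \emph{valid} non-evasiveness pattern---every child in the re-rooted tree is covered, and each grandchild we jump to genuinely exists and is special relative to the correct subtree (here one must check that subtrees ``away from the new root'' coincide with the subtrees used to define specialness relative to $x$)---and confirming that every subtree explored stays strictly below the node at which the block is applied, which is exactly what forces $x$ out of the pattern.
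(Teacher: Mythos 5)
Your proof is correct and follows essentially the same route as the paper's: your ``building block'' is the paper's re-rooting-at-the-special-child construction (grafting the special grandchild's pattern onto the new upward child when the two witnesses lie in different branches), and your ``good pair'' descent is the paper's induction on the height of $T$, unrolled into an explicit strictly-descending iteration whose termination argument absorbs the base case. No gaps.
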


  \begin{proof}
    Let $y$ be the special child and $z$ the special grandchild of~$x$, and let $\Pi_y$ and $\Pi_z$ be respectively the non-evasiveness pattern of $\phi_{x:y}$ rooted at $y$ and the non-evasiveness pattern of~$\phi_{x:z}$ rooted at~$z$.
    We proceed by induction on the height of~$T$. The base case is when~$T$ has height at most~$3$: then the special nodes $z$ and $y$ cannot have any child and must be leaves, and we have a suitable non-evasiveness pattern for $\phi$ with one variable, say $y$, as the root, and the other as its only child.

    For the induction case, there are two subcases. First, if
    $z$ is not a child of~$y$ in~$T$, then we conclude that $\phi$ has a
    non-evasiveness pattern rooted at~$y$. We build it from the non-evasiveness pattern $\Pi_y$ rooted at~$y$ for $\phi_{x:y}$, but we take into account the additional child $x$ of~$y$ when rooting~$\phi$ at~$y$ by adding $\Pi_z$ as a child node of the root in~$\Pi_y$. Note that $x$ does not occur as a node label in the result.
    
    Second, if $z$ is a child of~$y$, then~$y$ is not a leaf. As $y$ is special and $z$ is a neighbor of~$y$, the definition of~$\Pi_y$ ensures that 
    there must be a grandchild~$y'$ of~$z$ that is special. Letting $w$ be the
    intermediate node between $z$ and $y'$, it witnesses that~$z$ is not a leaf, so by definition of~$\Pi_z$ there must be a
    grandchild~$z'$ of~$w$ that is also special. We can thus repeat the argument
    on the subtree of~$T$ rooted at~$w$, which is of strictly smaller height. By
    induction, this subtree has a non-evasiveness pattern $\Pi_w$ that does not
    involve~$w$. Now, $\Pi_w$ is a non-evasiveness pattern for the whole~$\phi$,
    because we can connect the rest of the tree~$T$ to~$w$ without breaking the
    existing non-evasiveness pattern (as it does not involve~$w$).
  \end{proof}

  We now prove the lemma by contradiction. Assume there is a variable $x$ such that for each $b \in \{\false, \true\}$ some connected component of~$\phi_{x=b}$ has a non-evasiveness pattern, and let us show this implies $\phi$ has a non-evasiveness pattern. Let~$T$ be the tree obtained by rooting~$\phi$ at~$x$.  We wish to show that $x$ has a special child and a special grandchild, to conclude using the previous observation.
  
  Let us first take $b=\false$. Let us consider the monotone acyclic graph DNF $\phi_{x=\false}$, perform partial evaluation to remove $x$ and all terms involving $x$, and
  root the child subtrees $T_1, \ldots, T_n$ of $T$ at the variables $y_1, \ldots, y_n$ that co-occur with~$x$ in~$\phi$: these are the connected components to consider. We know by hypothesis that some connected component $T_i$ of~$\phi_{x=\false}$ has a
  non-evasiveness pattern $\Pi_i$. If $\Pi_i$ does not use $y_i$ as a label, then we can use it as non-evasiveness pattern for all of $\phi$, because adding back the rest of the tree $T$ does not break the definition of a non-evasiveness pattern (it does not add any new neighbor to consider), and this immediately concludes. Hence, it suffices to consider the case where $\Pi_i$ uses $y_i$ as a label, and we can re-root it to a non-evasiveness pattern of~$T_i$ rooted at~$y_i$. Hence, $y_i$ is special in~$T$.

  Now consider the monotone acyclic graph DNF $\phi_{x=\true}$. Performing partial evaluation, the neighbors $y_1, \ldots, y_n$ become singleton terms, and the other terms involving them are subsumed. The singleton connected components corresponding to the $y_1, \ldots, y_n$ cannot have non-evasiveness patterns because they appear in singleton terms. The other connected components are the subtrees $T_1, \ldots, T_m$ rooted at the grandchildren $z_1, \ldots, z_m$ of the root of~$T$. By hypothesis, one of them, say $T_j$,
  has a non-evasiveness pattern, and as in the previous
  paragraph has a non-evasiveness pattern rooted at the grandchild~$z_j$, so that
  $z_j$ is special in~$T$.  We can now use the existence of the special child~$y_i$ of~$T$ and the special
  grandchild~$z_j$ of~$T$ to establish using Observation~\ref{obs:badcase}
  that~$T$ has a non-evasiveness pattern. This concludes the proof.
\end{proof}

To conclude, we have shown in Lemma~\ref{lem:patterntononevasive} that indeed, if a non-evasiveness pattern occurs in a monotone acyclic graph DNF $\phi$, we have a strategy that always probes less variables than the number of variables of~$\phi$, i.e., $\phi$ is non-evasive. Lemma~\ref{lem:nopatterntoevasive} proves, for the other direction, that if $\phi$ has no non-evasiveness pattern, then it is indeed evasive.
This completes the proof of Theorem~\ref{thm:acyclic}.
\section{Related Work}\label{sec:related}

\paragraph*{Interactive Boolean Evaluation and Consent Management} The problem of Interactive Boolean Evaluation~\cite{allen2017evaluation,boros2000sequential,deshpande2014approximation} and BDD optimization has been extensively studied in multiple contexts. These include system testing, e.g.,~\cite{boros2000sequential,unluyurt2004sequential} 
(where it is called \emph{Sequential System Testing} or \emph{Sequential Diagnosis}), BDD design, e.g.,~\cite{cicalese2014diagnosis,fiat2004decision,keren2008reduction} (where it is also called \emph{Discrete Function Evaluation}), active learning~\cite{golovin2011adaptive} (where it is a particular case of \emph{Bayesian Active Learning}) and its connection to other problems such as Stochastic Set Cover~\cite{deshpande2014approximation,kaplan2005learning} 
(where it is termed \emph{Stochastic Boolean Function Evaluation (SBFE)}). Boolean Evaluation is also a component of \emph{Consent Management}~\cite{amsterdamer2019pepper,drien2021managing}, where the choice of variables to observe coincides with requests to attain consent to use specific input tuples.
Many of these studies focus on optimization goals different than ours, most notably the expected number of observed variables under a probabilistic model~\cite{allen2017evaluation,amsterdamer2019pepper,boros2000sequential,cicalese2014diagnosis,deshpande2014approximation,drien2021managing,kaplan2005learning,keren2008reduction,unluyurt2004sequential}, corresponding to the \emph{expected} path length in a BDD; but other alternatives such as minimizing the size of the BDD have been considered~\cite{fiat2004}. 

Worst-case/depth analysis of BDDs and related problems have been studied in additional contexts, e.g., testing edges to decide about graph properties (e.g.,~\cite{chronaki1990survey,kahn1983topological,scheidweiler2013lower,yao1988monotone}) or letters to decide string properties~\cite{chronaki1990survey}. The Boolean functions considered are thus very different from ours. There is also work on computing BDDs with minimum depth for input-output sample pairs (e.g.,~\cite{cicalese2014diagnosis,fiat2004decision}), but where a Boolean formula is not a-priori known.

\paragraph*{Probabilistic databases} A large body of work studied query evaluation in probabilistic databases~\cite{dalvi2007efficient,koller1999probabilistic,papaioannou2018supporting,suciu2011probabilistic,vandenbroeck2017query}, which resembles our work in the use of data provenance~\cite{green2007provenance,imielinski1984incomplete} to track the connection between input and query output and the approach of identifying classes of queries that yield provenance expressions of favorable structure. 
In particular, prior work on probabilistic databases has characterized query classes where provenance may be transformed into read-once form for its useful properties such as easy probability computation~\cite{fink2011optimal,jha2013knowledge,roy2011faster,sen2010readonce}. In our context, we have studied the evasiveness of such expressions (Section~\ref{sec:ro}).

\paragraph*{Data Provenance} Provenance for query results has been extensively studied, with multiple models and applications~\cite{amsterdamer2011limitations,amsterdamer2011aggregate,amsterdamer2011putting,buneman2001why,cheney2009provenance,deutch2014circuits,glavic2009provenance,glavic2009perm,green2007provenance,imielinski1984incomplete,karvounarakis2010querying,olteanu2012factorised,senellar2018provsql}. 
Specifically, as discussed above the shape of the Boolean provenance may depend on the used data model (e.g., relational, graph databases, nested data) and query language (positive relational algebra, difference, Datalog). This, in turn, may affect the problem analysis, as we have shown for monotone Boolean provenance yielded by queries without negation.

\paragraph*{Crowdsourced databases} In practical scenarios, the worst-case number of observed variables may be translated to the worst-case number of questions posed to a human expert or crowd members, which should be accounted for, e.g., in optimization or by proper budget assignment. Works on crowd data sourcing such as~\cite{bergman2015query,franklin2011crowddb,li2018cdb,marcus2011crowdsourced,parameswaran2012deco,parameswaran2011human} considered different problems that typically also involve some optimization of the questions posed to crowd members. Specifically, the work of~\cite{bergman2015query} considered the cleaning of an input database in order to fix query results, which involves the propagation of Boolean correctness from the input to output tuples; but the problem they solve is completely different and hence so are their analysis and results.

\section{Conclusion and Future Work}\label{sec:conc}
We have studied in this paper the problem of \emph{evaluating} provenance expressions, namely repeatedly probing individual variables to reveal their truth values in order to reveal the truth value of entire expressions. We have focused here on a worst-case analysis of the problem based on BDD representation of the provenance. We have shown that computing the number of required probes is generally intractable, but identified several classes of provenance expressions for which the problem becomes tractable. In some of these classes, we can identify that the expressions are evasive and one cannot do better than simply probing all variables, and in others, we can in fact do exponentially better than this naive approach. 

The motivation for our study originates in our work on consent management~\cite{drien2021managing} where we have used interactive Boolean evaluation of provenance expressions. Yet our techniques and analysis may have applications in other domains as well, such as data cleaning and view maintenance.
In future work we will further explore such application domains, and will continue characterizing the query classes and corresponding classes of Boolean expressions that admit efficient interactive evaluation algorithms. 

\paragraph*{Acknowledgements}
We thank the anonymous reviewers for their insightful comments. This work was partly funded by the Israel Science Foundation (grant No. 2015/21), by a grant from the Israel Ministry of Science and Technology, and by the grants ANR-18-CE23-0003-02 (``CQFD'') and ANR-19-CE48-0019 (``EQUUS'').

\bibliographystyle{abbrv}
\bibliography{bib}

\end{document}